\newtheorem{lemma}{Lemma}
\newtheorem{definition}{Definition}%
\newtheorem{theorem}{Theorem}%  meant for 
\title{Graph-based method for constructing consensus trees}
\author{
 Elio Torquet \\
  Department of Bioinformatics \\
  University of Montpellier \\
  163 rue Auguste Broussonnet \\ 
  34000 Montpellier, France \\
  \texttt{elio.torquet@etu.umontpellier.fr} \\
  \And
 Jesper Jansson \\
  Department of Communications and Computer Engineering \\
  Kyoto University \\
  Yoshida-Honmachi, 606-8501 Tokyo, Japan \\
  \texttt{jj@i.kyoto-u.ac.jp} \\
  \And
 Nadia Tahiri \\
  Department of Computer Science \\
  University of Sherbrooke \\
  2500 Boulevard de l'Université, Sherbrooke, QC, Canada \\
  \texttt{Nadia.Tahiri@USherbrooke.ca} \\
}
\begin{document}

\maketitle

\begin{abstract}
A consensus tree is a phylogenetic tree that synthesizes a given collection of phylogenetic trees, all of which share the same leaf labels but may have different topologies, typically obtained through bootstrapping. Our research focuses on creating a consensus tree from a collection of phylogenetic trees, each detailed with branch-length data. We integrate branch lengths into the consensus to encapsulate the progression rate of genetic mutations. However, traditional consensus trees, such as the strict consensus tree, primarily focus on the topological structure of these trees, often neglecting the informative value of branch lengths. This oversight disregards a crucial aspect of evolutionary study and highlights a notable gap in traditional phylogenetic approaches. In this paper, we extend \textit{PrimConsTree}\footnote{A preliminary version of this article was presented at \emph{the Fifteenth International Conference on Bioscience, Biochemistry, and Bioinformatics (ICBBB~2025)}~(reference~\cite{torquet2005icbbb}).}, a graph-based method for constructing consensus trees. This algorithm incorporates topological information, edge frequency, clade frequency, and branch length to construct a more robust and comprehensive consensus tree. Our adaptation of the well-known Prim algorithm efficiently identifies the maximum frequency branch and maximum frequency nodes to build the optimal consensus tree. This strategy was pre-processed with clustering steps to calibrate the robustness and accuracy of the consensus tree.\\
\textbf{Availability and implementation:} The source code of PrimConsTree is freely available on GitHub at \url{https://github.com/tahiri-lab/PrimConsTree}.
\end{abstract}

%%================================%%
%% Sample for structured abstract %%
%%================================%%

\keywords{Evolution, Consensus Tree, Graph Theory, Clustering, Phylogeny, Prim Algorithm}

\maketitle

\section{Introduction}
\label{sec:intro}
The exploration of phylogenetic relationships is crucial in biology to reveal evolutionary connections among species. A consensus tree is a computational method used to synthesize a set of phylogenetic trees, aiming to distill the most frequently occurring characteristics into a single representative tree. This approach facilitates the identification of common evolutionary relationships and patterns across multiple phylogenetic analyses \cite{o2018efficacy}. In this context, the accuracy of a consensus tree lies in its ability to effectively represent the collective input while accounting for uncertainty or divergence in the data. However, consolidating multiple trees into a single structure presents a significant challenge \cite{degnan2009properties}.

The difficulty in this process is reconciling variations and conflicts that may arise from a set of phylogenetic trees. Integrating diverse evolutionary perspectives and resolving inconsistencies becomes intricate due to structural differences between trees, often making them incompatible. This mismatch requires careful consideration and the application of advanced computational techniques to construct a coherent and accurate composite consensus tree. In addressing this issue, a proposed solution recommends employing multiple consensus trees for a more comprehensive approach \cite{tahiri2022building,aguse2019summarizing,tahiri2018new}.

The phylogenetic tree involves three main elements: 1) topology, 2) branch length, and 3) label position. In phylogenetic trees, topology refers to the branching structure that represents evolutionary relationships among species. It is important to note that closely related species based on topology may not necessarily exhibit morphological similarity. For instance, crocodiles are more closely related to birds than to lizards based on their evolutionary lineage, despite crocodiles and lizards appearing more morphologically similar. This occurs due to differing rates of evolutionary change, particularly rapid morphological evolution along the bird lineage. Branch lengths in phylogenetic trees represent the amount of genetic divergence, commonly measured in nucleotide substitutions. While branch lengths can sometimes be interpreted as indicative of time, this interpretation is contingent upon the assumption of a molecular clock. In cases where terminal branches descend from the same common ancestor, differences in branch lengths are generally interpreted as variations in evolutionary rates rather than direct representations of temporal duration. For instance, a longer branch may reflect a higher rate of substitution rather than a longer period of time when compared to a shorter branch arising from the same ancestral node.
The position of the labels, affixed to the ends of the branches, provides essential taxonomic information, revealing evolutionary links through their relative positions.

A crucial element in the accurate depiction of phylogenetic trees is the incorporation of branch lengths, representing evolutionary time or genetic change among species or sequences. These lengths contribute significantly to the understanding of the temporal aspects of evolution, offering insights into the processes shaping the \textit{Tree of Life}. Branch lengths play multiple roles across diverse domains, from assessing phylogenetic diversity to identifying and analyzing selection processes \cite{felsenstein1985phylogenies,hahn2005estimating,kosakovsky2005not, volz2013viral,lefort2015fastme,rannala2015art}. Despite substantial advancements in reconstructing phylogenetic trees, the challenge of constructing consensus trees with more topological information and meaningful branch lengths remains an active research area.

In response to this problem, our study introduces an efficient methodology for constructing Maximum Spanning Trees (MST) that consider edge and clade frequencies. This unified framework aims to integrate information from diverse phylogenetic inference methods and data sources, culminating in a consensus tree encapsulating widely supported branches and their associated branch lengths. Our approach promises a refined and comprehensive perspective on evolutionary relationships, shedding light on the consensus time scale of evolution.

\subsection{Our contributions}
\label{sec:contribution}
The following outlines our primary contributions to this research:
\begin{itemize}
    \item We extended PrimConsTree (version 1) from \cite{sifat2024new}, a graph-based method for constructing consensus trees. This method is designed to produce well resolved consensus tree incorporating branch length. This data being highly informative, deserves more attention in the fields of phylogenetic analysis, evolutionary biology, and comparative genomics.

    \item We incorporated a clustering step into the pipeline of PrimConsTree (version 2). Creating homogeneous clusters enables the production of multiple, yet more informative, consensus trees, each of which reveals an alternative gene history.
    
    \item We explored edge frequency and clade frequency estimation in the construction of a MST. These criteria preserve, to the greatest extent possible, the topological aspects of the input trees.

    \item We used our version of PrimConsTree to infer horizontal gene transfer on a group of Archaeabacteria and compare our results with the results obtained by the extended majority rule consensus tree.
\end{itemize}

\section{Definitions and notation}
\label{sec:def}
This article uses standard \emph{phylogenetic tree} terminology. A phylogenetic tree is a rooted, directed tree where every internal node has at least two children and each leaf has a distinct label. Given a tree $T$, the set of all nodes in $T$ is $V(T)$ and it includes two disjoint subsets: the leaf nodes $L(T)$, and the internal nodes (including the root node) $I(T)$. The set of edges in $T$ is $E(T)$ and an edge $(u, v) \in E(T)$ denotes a directed link from node $u$ to $v$. An edge $(u, v)$ represents a parent-child relation where $u$ is the parent of $v$ and $v$ is the child of $u$. The length of an edge $(u, v)$ in $T$ is denoted by $dist_T(u, v)$ and represents the amount of genetic change from $u$ to $v$. Given a node $u \in V(T)$, $T[u]$ means the subtree of $T$ rooted at $u$.

In a phylogenetic tree, a \emph{clade} is any subset of the leaves that have a common ancestor such that no other leaves in the tree have that same node as an ancestor. Given a tree $T$, each node $u \in V(T)$ represents a distinct clade that is defined as $C_u = L(T[u])$. A clade $C_u$ is said to \emph{be supported} by $T$ if there exists a $v \in V(T)$ with $L(T[v])=C_u$. Two distinct clades $C_u$ and $C_v$ are said to be \emph{compatible} either if $C_u \subseteq C_v$, $C_v \subseteq C_u$ or $C_u \cap C_v = \emptyset$.

The set of input trees $S = \{T_1, \dots , T_k\}$ is a set of $k$ phylogenetic trees, all sharing the same set of leaves $L(S) = L(T_1) = \dots = L(T_k)$. The parameters used to measure the size of the input are $k = |S|$ for the number of trees and $n = |L(S)|$ for the number of leaves, respectively. We refer to the subset of trees in $S$ that support the clade $C_u$ by $S_u$, i.e., for every internal node we define $S_u = \{T \in S: u \in V(T)\}$.

In this study, a graph refers to an undirected weighted graph. Given a graph $G$, its set of vertices is $V(G)$, its set of edges is $E(G)$, and its weights are $W(G)$. An edge $(u, v) \in E(G)$ denotes an undirected link between vertices $u$ and $v$ so $(u, v)$ is the same as $(v, u)$. The weight of the edge $(u, v)$ in $G$ is $W_G(u, v)$; as only one graph is involved, we simply write $W(u, v)$ for convenience. Given a subset of vertices $X \subseteq V(G)$, let $G[X]$ denote the subgraph induced by $X$.

\begin{definition}[Consensus tree]
    Let $S = \{ T_1, \ldots, T_k \}$ be a set of $k$ phylogenetic trees on the same set of species $L = L(T_1) = \dots = L(T_k)$. A \emph{consensus tree} of $S$ is a phylogenetic tree $T_c$ with $L(T_c) = L$ that summarizes all of the trees in $S$.
\end{definition}

The main challenge considered in this paper is to find a consensus tree $T_c$ of $S$ that accurately represents the topology as well as the branch lengths of all the trees in $S$.
In the case of building a consensus tree, the objective function ($OF$) of the method can be defined as follows:

\begin{equation}
\label{eq:objective}
    OF = \sum_{i=1}^k dist(T_i, T_c),
\end{equation}

where $k$ is the number of input trees, $T_c$ is the consensus tree, and $dist(T_i, T_c)$ is a distance metric between input phylogenetic tree $i$ (denoted $T_i$) and $T_c$. The objective is to determine the consensus tree, $T_c$, that minimizes the sum of distances across all input trees, thereby optimizing the agreement between $T_c$ and the input set. This optimization seeks to capture the central tendency of the phylogenetic relationships encoded within the input trees while minimizing discordance.

\section{Related work}
\label{sec:work}

Given a set of phylogenetic trees, many methods exist for defining a consensus tree \cite{bryant2003classification,jansson2016improved}. The most well-known types of consensus trees are the strict consensus tree \cite{sokal1981taxonomic,wilkinson2001efficiency}, the majority-rule consensus tree \cite{wilkinson1996majority,margush1981consensusn}, the extended majority consensus tree (also referred to in the literature as the greedy consensus tree) \cite{bryant2003classification,jansson2016improved}, and the frequency
difference consensus tree~\cite{goloboff2003improvements,G05,STACS2024} (called the plurality
consensus tree in~\cite{V_19}). 
All consensus tree inference methods are based on the topology of the input phylogenetic trees. They mostly focus on the representation of each clade in the input trees.
Given $|S_u|$ the number of trees that support the clade $C_u$, the consensus methods are defined as follows:
    
\begin{itemize}
    \item The \emph{strict consensus tree} only retains clades that are supported by all trees, in other words, all the clades $C_u$ such that $|S_u| = k$. This often results in incomplete trees.
    \item The \emph{majority-rule consensus tree} contains all clades that are supported by more than half of the trees ($|S_u| > k/2$), and results in a partially resolved tree.
    \item The \emph{extended majority consensus tree} keeps the majority clades and adds other compatible clades in decreasing order of $|S_u|$. This is the most common approach, leading to commonly more fully resolved trees.
    \item The \emph{frequency difference consensus tree} contains every clade~$C_u$ that is supported by more trees than each of the clades that is incompatible with~$C_u$.
\end{itemize}

Unfortunately, none of the methods described above is able to reconstruct branch lengths. Recently, a novel method for generating a consensus tree that incorporates branch lengths was proposed by Sifat and Tahiri \cite{sifat2024new}. The experimental analysis conducted was limited, involving the utilization of branch length and edge frequency to derive the MST. The obtained results were then compared with a majority-rule consensus tree, revealing a close similarity.

The new method proposed in this article is an extension of the method of Sifat and Tahiri \cite{sifat2024new}, adding key criteria to improve the accuracy and interpretability of the consensus tree. Specifically, we have refined the method with two major improvements. Firstly, we have added a very important clustering phase. This phase reduces conflicts in the input trees and allows for the generation of multiple consensus trees, each with its own specificity, highlighting the different alternative evolutionary histories, increasingly highlighted in the literature \cite{makarenkov2023inferring}. Secondly, we modified the spanning tree criteria to obtain a more biologically relevant consensus tree. While branch length is an important feature, it does not always provide reliable information about the topology of the tree and should therefore not be the sole criterion for deriving the MST. Although some researchers might place greater trust in longer internal branches, as they often correlate with higher clade support, shorter branches in multiple trees can sometimes indicate weak or arbitrary resolutions. For this reason, we prioritized a spanning tree that maximizes edge and clade frequency, which we believe better captures the overall phylogenetic signal. However, we recognize that opinions on this may differ based on specific contexts.

\section{Method}
\label{sec:method}

In this study, we address the complex problem of integrating simultaneously edge frequency and branch length into the construction of a consensus tree. Our method pipeline, depicted in Figure \ref{fig1}, involves a series of steps to derive a consensus tree from the input phylogenetic trees. We subdivide the new method in two main phases, in addition to a preprocessing phase. In the preprocessing phase, we partition the input trees into several clusters (see Section \ref{sec:cluster}), then each cluster can be processed independently by the following main phases. In the first phase, we transform a set of trees into a super-graph. In the second phase, we use an adapted version of the algorithm of Prim to construct a Maximum Spanning Tree (MST) of the super-graph. This MST serves as a base to create the consensus tree. We describe the first phase in Algorithm \ref{alg:supergraph}, and the second phase in Algorithms \ref{alg:prims} and \ref{alg:consensus}. Figure \ref{fig1} (a,b) presents the clustering phase, Figure \ref{fig1} (c) presents the first phase, and Figure \ref{fig1} (d,e) presents the second phase of our method.

\begin{figure*}[!]
\centering
\includegraphics[width=1\textwidth]{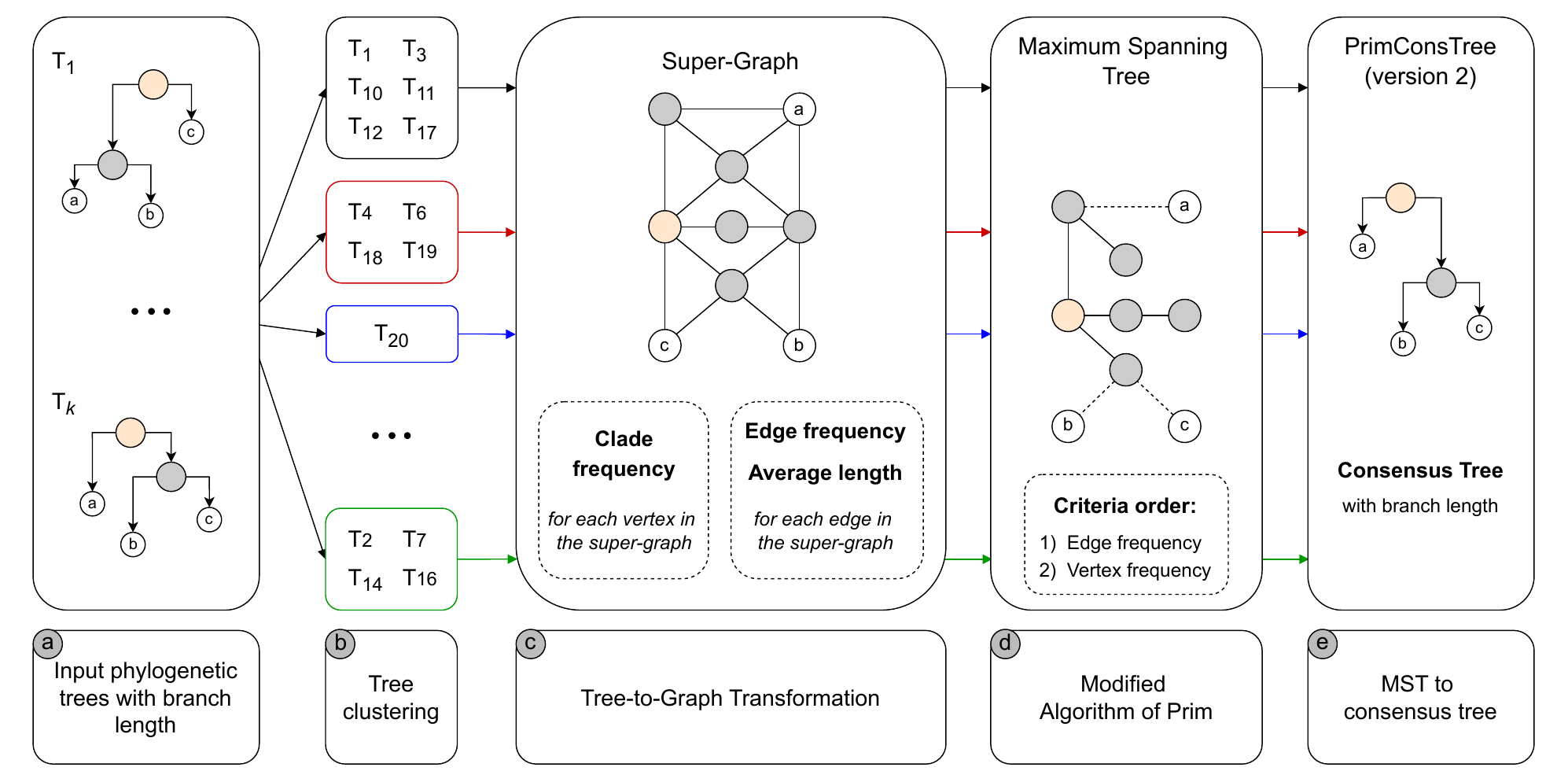}
\caption{Comprehensive visualization of the system architecture overview, illustrating the three main steps of the approach (a-e). The process initiates with the initial input trees and concludes with the generation of one consensus tree per cluster, through the modified Prim algorithm \cite{jarnik1930jistem}. From step (b), each color of the arrows represents an independent instance of the pipeline.}
\label{fig1}
\end{figure*}

%Sub section-------------------1------------------ of Mehtod
\subsection{Clustering}
\label{sec:cluster}
Phylogenetic analysis involves three distinct steps. In the first step, researchers collect data, such as genomic, proteomic, and metabolomic data, for the different taxa under study (e.g., genes, species, morphology). The next step is to apply a tree reconstruction method to the collected data. Many of these methods produce several potential trees for the given data set. Often, hundreds or thousands of trees can be obtained. In the final step, a consensus tree is calculated from the candidate trees to reconcile conflicts, summarize information, and mitigate the large number of potential solutions in the evolutionary story. Although many consensus tree methods exist (see Section \ref{sec:work}), they generally generate a single tree, which poses problems such as loss of information and susceptibility to outliers.

Given $k$ genes defined across $n$ species, the problem is to identify the optimal partition of phylogenetic trees that exhibit similar patterns of evolutionary history while accounting for outliers. 
This intermediate step in constructing a consensus tree facilitates the resolution of conflicts among trees and emphasizes the potential for various alternative consensus trees \cite{tahiri2018new,tahiri2022building}. Figure \ref{fig1} (b) illustrates the clustering step in the process. We have added the choice of $k$-medoids to create a homogeneous clustering with the Silhouette (SH) index \cite{rousseeuw1987silhouettes}. We selected the $k$-medoids clustering method for its ability to handle variability in phylogenetic tree topologies and minimize the influence of outliers. This method optimizes the SH index \cite{rousseeuw1987silhouettes} to form clusters with well-supported topological similarities. Its non-hierarchical nature avoids biases that may arise in hierarchical approaches, making it suitable for datasets with complex evolutionary patterns. The SH index is a method used to interpret and validate the consistency within clusters of data, providing a concise graphical representation of how well each object has been classified. It measures how similar an object is to its cluster (cohesion) compared to other clusters (separation). This value ranges from -1 to +1, with a high value indicating that the object is well-matched to its cluster and poorly matched to neighboring clusters. If most objects have high SH indices, the clustering configuration is considered appropriate. Conversely, if many points have low or negative values, the clustering configuration may have too many or too few clusters.

All subsequent steps following clustering will independently process the different clusters obtained by $k$-medoids (see Figure \ref{fig1}, which shows the various colored arrows). From now on, the set of input trees $S$ will refer to the set of trees within a given cluster.

%Sub section-------------------2------------------ of Mehtod
\subsection{Tree-to-graph transformation}

In this phase, depicted in Figure \ref{fig1} (c), we present a dynamic tree-to-graph transformation, enabling the conversion of a set of input phylogenetic trees into an undirected weighted graph called the super-graph. We explain the annotated naming of internal nodes followed by the construction of the super-graph. During this phase, we also detail the computation of edge frequency, clade frequency, and average edge lengths. All steps described here are processed in Algorithm \ref{alg:supergraph}.

\begin{definition}[Annotated tree]
\label{def:tree_annotated}
An annotated tree $T$ is a tree structure that satisfies the properties of a phylogenetic tree. Additionally, every internal node $u \in I_{T}$ is named according to the process described below.
\end{definition}

The super-graph construction brings together nodes and edges from multiple trees. To identify the occurrences of the same internal node in distinct trees, we transform each phylogenetic tree of $S$ into an annotated tree by identifying its internal nodes. Each internal node is named after its corresponding clade, and any two nodes $u \in I(T_i)$ and $v \in I(T_j)$ are considered equivalent if $C_u = C_v$. Therefore, the same node occurring in multiple trees can reconcile different branching patterns. For example, consider how internal nodes are labeled in Figure \ref{fig:illustr} (a). The node $abc$ appears in two different trees, each time with a distinct underlying topology. It is worth noting that all trees in $S$ share the same set of leaves, thus leaving every root node with the same name after the name attribution of the internal nodes.

It follows that after this step, edges are also identified according to the new names of internal nodes. Two edges $(u, v) \in E(T)$ and $(u', v') \in E(T')$ are equivalent if $u = u'$ and $v = v'$. We do not consider edge lengths or edge directions since the graph is undirected. On the other hand, the edges are distinct if $u \neq u'$ or $v \neq v'$.

\begin{definition}[Annotated trees union]
\label{def:tree_union}
Given a set of annotated trees $D$, the annotated tree union operation is formally denoted $\bigcup_{T_i \in D} T_i$. The result of the operation is an undirected unweighted graph $G$ whose vertex set $V(G)$ can be written as $V(G) = I(G) \cup L(G)$, where $I(G)$ are the
internal nodes of the trees in $D$ and $L(G)$ the leaf nodes of the trees in
$D$, respectively. (For clarity, we reserve the term \textit{vertices} for graphs and the term
\textit{nodes} for trees.) Vertices are given by $I(G) = \bigcup_{T_i \in D} I(T_i)$ and $L(G) = \bigcup_{T_i \in D} L(T_i)$, and edges are given by $E(G) = \bigcup_{T_i \in D} E(T_i)$.
\end{definition}

The super-graph $\mathcal{G}$ is the result of the annotated tree union operation applied on the set of input trees $S$. Every tree shares the same root, so we call the unique vertex corresponding to the root in $V(\mathcal{G})$ the \textit{root vertex} for convenience. Additionally, an \textit{edge frequency} is attached to each edge $(u, v) \in E(\mathcal{G})$ as the edge weight $W(u, v)$. Given an edge $(u, v)$, its frequency $W(u, v)$ is the number of input trees that contain this edge. Formally, let $S_{(u, v)} = \{T \in S: (u, v) \in E(T)\}$ denote the set of trees that contain the edge $(u, v)$. Then, the formula for edge frequency is defined as follows:

\begin{equation}
\label{eq:edge_frequency}
W(u, v) = |S_{(u, v)}|.
\end{equation}

Similarly, a \textit{clade frequency} is assigned to each internal vertex in the super-graph. Let $u$ be an internal vertex in $I(\mathcal{G})$, we note $F(u)$ the clade frequency associated $u$ and give the following formal definition. Given $S_u = \{T \in S: u \in V(T)\}$ the set of trees support the clade $C_u$, and $r$ the root vertex, the formula for clade frequency is defined as follows:

\begin{equation}
\label{eq:vertex_frequency}
F(u) = 
\begin{cases}
    |S_u|,& \text{if } u \neq r.\\
    0,    & \text{otherwise.}
\end{cases}
\end{equation}

Clade frequency provides crucial insight into how often a clade is supported in the input trees. The clades of a phylogenetic tree are highly significant because, collectively, they describe the entire topology of the tree. 
Consequently, clade frequency constitutes an essential metric in the construction of consensus trees, providing an objective measure for representing the recurring topologies across the input trees.

Originally, the clade frequency of the root vertex should be equal to $|S|$ because the corresponding node is present in all trees. We deliberately put it to $0$ to avoid a biased situation where the root vertex gains undue importance merely due to its presence in every tree. This approach allows for a more balanced and accurate representation of clade significance across the super-graph.

Edge and clade frequency are closely related to each other but regardless, each gives important insight on the topology of the input trees. While the presence of an internal node $u$ in a tree relates to the presence of a clade, it does not provide details on the topology of the underlying subtree $T[u]$. On the other hand, because every internal node is named, each edge depicts an explicit link between two nodes, which is much more accurate and therefore, should be given more attention.
However, clade frequency is still very important when tree topologies are too conflicted to refer to explicit edges. For instance, in Figure \ref{fig:illustr} (a), leaf $a$ is never connected with the same edge; nevertheless, the clade $abc$ is represented in two trees out of three. Considering only edge frequency, $a$ might be connected to the root in Figure \ref{fig:illustr} (d), leaving clade $abc$ unresolved in Figure \ref{fig:illustr} (e). The clade frequency suggests a preference for the edge $(abc, a)$ as it contributes to including the clade $abc$.

\begin{figure*}[!]
\centering
\includegraphics[width=1\textwidth]{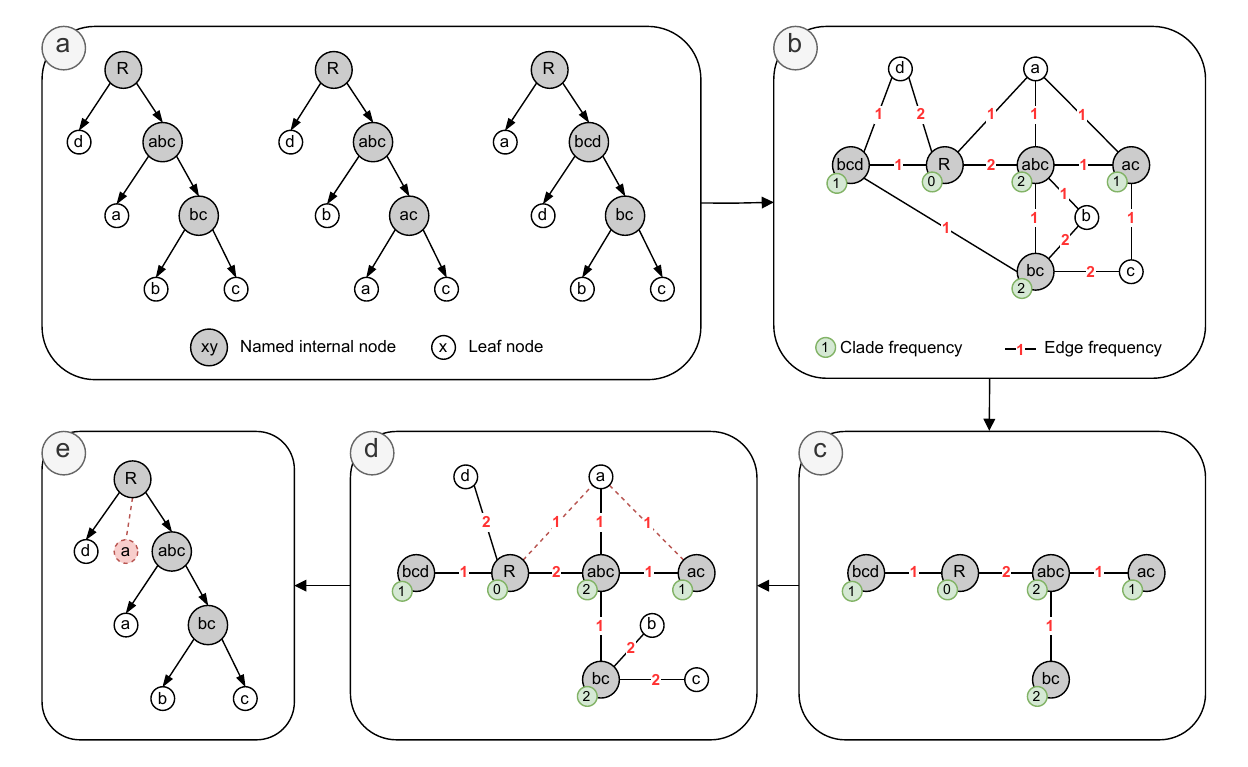}
\caption{This small example outlines how edge and clade frequencies were utilized to retain topological information in PrimConsTree. (a) Three input trees, with already named internal nodes (with root $abcd$ as R); (b) Construct a super-graph encompassing all nodes, incorporating edge and clade frequencies; (c) Derive the Maximum Spanning Tree (MST) from the super-graph, with a focus on internal vertices; (d) Establish connections between leaves and internal vertices based on edge and clade frequencies; (e) The consensus tree is obtained by PrimConsTree by removing unnecessary internal nodes from the MST. In (d, e), red edges outline how ignoring clade frequency could lead to the loss of an important information on the clade $abc$.}
\label{fig:illustr}
\end{figure*}

The last attribute to be attached to the super-graph is \textit{average edge length}. It describes the average length of a given edge across every tree of $S$ that includes the edge. Other trees are omitted because they do not hold relevant information about the length of the edge. The formula to compute average edge length is defined as follows:

\begin{equation}
\label{eq:avg_edge_len}
D(u,v) = \frac{\sum_{T \in S_{(u, v)}} dist_{T}(u, v)}{W(u, v)},
\end{equation}

where $S_{(u, v)} \subseteq S$ is the subset of trees containing an edge between $u$ and $v$, i.e., $S_{(u, v)} = \{T \in S | (u, v) \in E_{T}\}$, $dist_{T}(u, v)$ is the length of this edge in $T$, and $W(u, v)$ is the edge frequency of the edge $(u, v)$ defined in Equation (\ref{eq:edge_frequency}). 

In the context of our analysis, the potential impact of outliers is minimized by the clustering. Therefore, every tree should have its importance in the result and the average is the most appropriate metric because it takes every tree into account.

\begin{lemma}[Connectivity of the super-graph]
\label{lem:connect_supergraph}
Let $\mathcal{G}$ be the super-graph resulting from the annotated trees union operation, applied on the set of input trees $S$. The graph $\mathcal{G}$ is connected and the subgraph $\mathcal{G}[I(\mathcal{G})]$ restricted to internal vertices is also connected.
\end{lemma}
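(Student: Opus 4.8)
The plan is to show connectivity by observing that the super-graph $\mathcal{G}$ is built as a union of trees that all share the same root, and then handle the restriction to internal vertices separately.

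\textbf{Connectivity of $\mathcal{G}$.} First I would note that each annotated tree $T_i \in S$ is, as a graph, connected: ignoring edge directions, a phylogenetic tree is a connected acyclic graph on its node set $V(T_i)$. By Definition~\ref{def:tree_union}, $E(\mathcal{G}) = \bigcup_i E(T_i)$ and $V(\mathcal{G}) = \bigcup_i V(T_i)$, so each $T_i$ (viewed as an undirected graph) is a connected subgraph of $\mathcal{G}$ spanning $V(T_i)$. The key structural fact I would invoke is that all trees in $S$ share the same set of leaves $L(S)$ and, after the internal-node annotation, the same root vertex $r$ (the clade $L(S)$ is supported by every $T_i$, and it is the unique vertex named after that clade). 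Hence every tree $T_i$, as a subgraph of $\mathcal{G}$, contains the root vertex $r$. Now take any two vertices $x, y \in V(\mathcal{G})$. There exist indices $i, j$ with $x \in V(T_i)$ and $y \in V(T_j)$. Inside the connected subgraph $T_i$ there is a path from $x$ to $r$, and inside $T_j$ there is a path from $r$ to $y$; concatenating them gives a walk from $x$ to $y$ in $\mathcal{G}$. Therefore $\mathcal{G}$ is connected.

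\textbf{Connectivity of $\mathcal{G}[I(\mathcal{G})]$.} This is the part I expect to be the main obstacle, because deleting all leaves could in principle disconnect the union even though it does not disconnect any single $T_i$ — the leaves are shared, the internal nodes are not. The clean way to argue is again through the shared root. For a fixed tree $T_i$, consider $T_i$ with its leaves deleted, i.e. $T_i[I(T_i)]$. I claim this is connected: it is exactly the subtree of $T_i$ obtained by repeatedly pruning leaves of a tree, which keeps it a tree (every internal node $u$ of $T_i$ has a unique path to the root through internal nodes only, since all ancestors of an internal node are internal). So $T_i[I(T_i)]$ is a connected subgraph of $\mathcal{G}[I(\mathcal{G})]$ containing the root vertex $r$ (note $r \in I(\mathcal{G})$, and in the degenerate case $n \le 2$ there may be essentially nothing to prove). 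Also note that every edge of $T_i$ between two internal nodes survives in $\mathcal{G}[I(\mathcal{G})]$, since $E(\mathcal{G}[I(\mathcal{G})])$ consists precisely of the edges of $E(\mathcal{G})$ with both endpoints internal, and $E(\mathcal{G}) \supseteq E(T_i)$. Now, given two internal vertices $x \in I(T_i)$, $y \in I(T_j)$, I build a path $x \leadsto r$ inside $T_i[I(T_i)]$ and $r \leadsto y$ inside $T_j[I(T_j)]$, both of which live in $\mathcal{G}[I(\mathcal{G})]$; their concatenation connects $x$ to $y$ within $\mathcal{G}[I(\mathcal{G})]$. Since $x, y$ were arbitrary, $\mathcal{G}[I(\mathcal{G})]$ is connected.

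\textbf{Remarks on the write-up.} The only subtlety worth stating explicitly is why the ancestors of an internal node are themselves internal — this is immediate from the definition of a phylogenetic tree (leaves have no children, so no leaf can be an ancestor), and it is what guarantees that leaf-pruning preserves connectivity both per-tree and hence in the union. I would also remark that a vertex in $I(\mathcal{G})$ that is ``internal'' in one input tree is internal in every tree where it appears, because the annotation names a vertex by its clade $C_u = L(T[u])$, which has size $\ge 2$ exactly when $u$ is internal; so $I(\mathcal{G})$ is well-defined and the restriction is unambiguous. With these observations the proof is short: connectivity of each $T_i$ and of each $T_i[I(T_i)]$, plus the common root vertex, does all the work.
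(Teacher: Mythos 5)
Your proposal is correct and follows essentially the same approach as the paper's proof: connectivity of $\mathcal{G}$ via paths from the shared root vertex that survive in the union, and connectivity of $\mathcal{G}[I(\mathcal{G})]$ via the observation that a leaf cannot lie on a root-to-internal-node path, so those paths consist only of internal vertices. Your write-up is merely more explicit about concatenating two root-paths to connect arbitrary vertex pairs, which the paper leaves implicit.
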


\begin{proof}
Every tree in $S$ has the same set of leaves $L(S)$, so they all share the same root. Additionally, every tree $T \in S$ is a tree structure, so it holds paths from the root to every other node. Those paths still exist in $\mathcal{G}$ because every distinct edge is included in $E(\mathcal{G})$. Therefore, in $\mathcal{G}$, there is a path from the root vertex to any other vertex in $V(\mathcal{G})$, then $\mathcal{G}$ is connected. Moreover, by definition, a leaf in a tree is connected to exactly one internal node, so it cannot lie on a path between the root and an internal node. Then, in every tree, any path that connects the root with an internal node only contains internal nodes. Consequently, those paths still present in $\mathcal{G}[I(\mathcal{G})]$ and $\mathcal{G}[I(\mathcal{G})]$ are connected.
\end{proof}

Algorithm \ref{alg:supergraph} describes the process used to build the super-graph. Initially, the super-graph $\mathcal{G}$ is empty, first, we include the set of leaves such that $L(\mathcal{G}) = L(S)$, and then, we sequentially incorporate information from each tree $T \in S$. When adding a tree $T$, the graph internal vertices are updated as $I(\mathcal{G}) = I(\mathcal{G}) \cup I(T)$ and the graph edges are updated as $E(\mathcal{G}) = E(\mathcal{G}) \cup E(T)$. For each edge $(u, v) \in E(T)$, the edge frequency is increased by one and the average length is incremented by $dist_T(u, v)$. The clade frequency $F(u)$ of each node $u \in T$ is also increased by one. Finally, the average length of each edge is divided by its frequency.

\begin{algorithm}[!t]
\caption{\textsc{SuperGraph}: builds a super-graph gathering information from a set of input trees}
\label{alg:supergraph}
\begin{algorithmic}
\State \textbf{Input:} $S = \{ T_1,\cdots, T_k \}$ a set of $k$ phylogenetic trees \\
\hspace{0.9cm} sharing the same set of leaves $L(S)$
\State \textbf{Output:} $\mathcal{G}$ the super-graph including: \\
\hspace{2.1cm} - internal vertices $I(\mathcal{G})$, \\
\hspace{2.1cm} - leaf vertices $L(\mathcal{G})$, \\
\hspace{2.1cm} - edge frequencies $W(\mathcal{G})$, \\
\hspace{2.1cm} - clade frequencies $F$, and \\
\hspace{2.1cm} - average edge lengths $D$
\end{algorithmic}
\begin{algorithmic}[1]
\Function{SuperGraph}{$S$}
    \State $\mathcal{G} \gets \langle V(\mathcal{G}), E(\mathcal{G}), W(\mathcal{G}) \rangle$ \Comment{Empty super-graph}
    \State $L(\mathcal{G}) \gets L(S)$
    \For{$i \gets 1$ to $k$} \Comment{Incorporate $T_i$ into $\mathcal{G}$}
        \State Rename all internal nodes in $T_i$
        \State $I(\mathcal{G}) \gets I(\mathcal{G}) \cup I(T_i)$
        \State $E(\mathcal{G}) \gets E(\mathcal{G}) \cup E(T_i)$
        \ForAll{edge $(u, v) \in E_{T_i}$}
            \State $W(u, v) \gets W(u, v) + 1$ \Comment{Eq. (\ref{eq:edge_frequency})}
            \State $F(v) \gets F(v) + 1$ \Comment{Eq. (\ref{eq:vertex_frequency})}
            \State $D(u, v) \gets D(u, v) + dist_{T_i}(u, v)$
        \EndFor
    \EndFor
    \ForAll{edge $(u, v) \in E(\mathcal{G})$} \Comment{Average edge length}
        \State $D(u, v)  \gets D(u, v) / W(u, v)$ \Comment{Eq. (\ref{eq:avg_edge_len})}
    \EndFor
    \State \Return $\mathcal{G}$
\EndFunction
\end{algorithmic}
\end{algorithm}

At the end of the first phase, the main properties of the input trees are retained in the super-graph:
\begin{itemize}
    \item Tree topologies are retained firstly in the graph topology and secondly in its attributes, including edge frequencies $W$ and clade frequencies $F$. 
    \item Branch length data is retained in the average edge length attribute $D$.
\end{itemize}

%Sub section-------------------3------------------ of Mehtod
\subsection{Consensus tree construction from the super-graph}

In the following phase, depicted in Figure \ref{fig1} (d,e), the analysis involves the super-graph $\mathcal{G}$, including edge and clade frequency and average edge length. 
Algorithm \ref{alg:prims} builds a specific Maximum Spanning Tree (MST) of the super-graph and Algorithm \ref{alg:consensus} uses the MST as a base to generate a consensus tree $T_c$. Indeed, an MST is a straightforward and efficient way to find a tree structure that maximizes the values of its attributes (i.e., edge and clade frequency), consequently maximizing the accordance of its topology with the input trees. Another advantage of the MST is that it directly takes edges from the super-graph, which already retains branch length data.

In the context of a consensus tree, the resulting consensus tree must have the same set of leaves as the phylogenetic trees used as input. To ensure that the MST does not compromise this condition, we divide its construction into two main steps. The first step (see Algorithm \ref{alg:prims} lines 1 to 17) utilizes the subgraph $\mathcal{G}[I(\mathcal{G})]$ restricted to internal vertices only, and yields its Maximum Spanning Tree (MST). Indeed, constructing the MST from all vertices could lead to the conversion of leaf vertices into internal nodes within the MST, a scenario deemed undesirable. In the second step, the vertices in $L(\mathcal{G})$ are connected to the MST using the function \textsc{AttachLeaves} depicted in Algorithm \ref{alg:primleaves}.

The process of Algorithm \ref{alg:prims} follows the course of the original algorithm of Prim. It starts by including the root vertex in the MST. Then, at each iteration, it looks for every edge that has exactly one endpoint included in the MST and includes the one of higher weight. Subsequently, the process repeats until all vertices are included in the MST. The particularities of our modified Algorithm of Prim lie in two points. Firstly, the set of leaf vertices $L(\mathcal{G})$  is connected independently after the rest of the vertices. Secondly, in addition to the weight (i.e., edge frequency), the clade frequency is used to choose the optimal edge. 
 
Let $(u, v)$ be a candidate edge at a given iteration, where $u \in I(\mathcal{G})$ is included in the MST, $v \in I(\mathcal{G})$ is not included in the MST and $(u, v) \in E(\mathcal{G})$.
To choose the optimal edge, Algorithm \ref{alg:prims} first finds the edge of highest edge frequency $W(u, v)$. Indeed, edges give the most precise insights into the topology so edge frequency must be considered in priority.

In case multiple candidates of maximal edge frequency are available, clade frequencies of the edge endpoints are taken into account. First, the algorithm considers the clade frequency of the vertex that is not included in the MST, in other words, it maximizes $F(v)$. Then, if multiple optimal edges are still available, the clade frequency of the vertex included in the MST, $F(u)$, is maximized.

Figure \ref{fig:MST} shows how each criterion is used to find the MST. Starting from vertex $a$, the edge $(a, b)$ is chosen according to its edge frequency. Secondly, the edge $(b, c)$ is chosen over $(b, d)$ because of the clade frequency $F(c) = 2$ (i.e., clade frequency of the vertex not already included). Finally, edge $(b, d)$ is chosen over $(c, d)$ because of the clade frequency $F(b) = 3$ (i.e., clade frequency of the vertex already included).

\begin{figure}[!]
\centering
\includegraphics[width=0.2\textwidth]{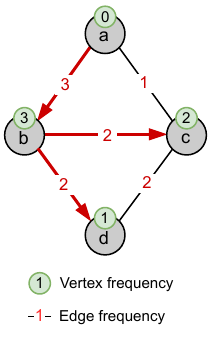}
\caption{Illustration of our modified algorithm of Prim (i.e., PrimConsTree version 2). Vertex names are defined arbitrarily. The process starts from root vertex $a$ and makes use of each criterion.}
\label{fig:MST}
\end{figure}

Then, Algorithm \ref{alg:primleaves} is used to connect each leaf vertex to the most suitable internal node. This is done in a very similar way to the rest of the MST. Given an internal vertex $u \in I(\mathcal{G})$ and a leaf vertex $v \in L(\mathcal{G})$ the criteria are maximized in the following order: first edge frequency $W(u, v)$, and then clade frequency of the internal vertex $F(u)$. Algorithm \ref{alg:primleaves} is depicted in the appendix as it is very close to the MST of internal vertices.

\begin{algorithm}[!t]
\caption{\textsc{ModifiedPrim}: build a maximum spanning tree with specific criteria and a predefined set of leaves}
\label{alg:prims}
\begin{algorithmic}
\State \textbf{Input:} $\mathcal{G}$ the super-graph including: \\
\hspace{1.7cm} - internal vertices $I(\mathcal{G})$, \\
\hspace{1.7cm} - leaf vertices $L(\mathcal{G})$, \\
\hspace{1.7cm} - edge frequencies $W(\mathcal{G})$, and \\
\hspace{1.7cm} - clade frequencies $F$
\State \textbf{Output:} $\mathcal{M}$ a maximum spanning tree on $\mathcal{G}$ \\
\hspace{1.2cm} such that $L(\mathcal{G}) \subseteq L(\mathcal{M})$
\end{algorithmic}
\begin{algorithmic}[1]
\Function{ModifiedPrim}{$\mathcal{G}$}
    \State $\mathcal{M} \gets \langle V(\mathcal{M}), E(\mathcal{M}) \rangle$ \Comment{Empty MST}
    \State Add the root vertex in $V(\mathcal{M})$
    \While{$|V(\mathcal{M})| < |I(\mathcal{G})|$}
        \State $e^* \gets (\textbf{Null}, \textbf{Null})$ \Comment{Optimal edge to add}
        \State $C \gets \{(u, v) \in E(\mathcal{G}): u \in V(\mathcal{M}),$  
        \State \hspace{9.21em} $v \notin V(\mathcal{M}),$
        \State \hspace{9.21em} $v \in I(\mathcal{G})\}$
        \ForAll{edge $(u, v) \in C$}
            \If{$W(u, v) > W(e^*)$}
                \State $e^* \gets (u, v)$
            \ElsIf{$W(u, v) = W(e^*)$}
                \If{$F(v) > F(e^*[1])$}
                    \State $e^* \gets (u, v)$
                \ElsIf{$F(v) = F(e^*[1])$}
                    \If{$F(u) > F(e^*[0])$}
                        \State $e^* \gets (u, v)$
                    \EndIf
                \EndIf
            \EndIf
        \EndFor
        \State $V(\mathcal{M}) \gets V(\mathcal{M}) \cup \{e^*[1]\}$
        \State $E(\mathcal{M}) \gets E(\mathcal{M}) \cup \{e^*\}$
    \EndWhile
    \State $\mathcal{M} \gets$ \textsc{AttachLeaves}($\mathcal{M}$, $\mathcal{G}$) \Comment{Alg. (\ref{alg:primleaves})}
    \State \Return $\mathcal{M}$
\EndFunction
\end{algorithmic}
\end{algorithm}

The final step is to use the MST as a base to create a valid consensus tree $T_c$ that respects the following conditions. Firstly, its edges must be directed and must have edge length attached. Secondly, every internal node must have at least two children. Finally, the set of leaves must be equal to the initial set of leaves such that $L(T_c) = L(S)$. The Algorithm \ref{alg:consensus} is used to arrange the MST into a valid consensus tree.

The consensus tree is directed by positioning the root as the root vertex. For each edge $(u, v)$ in the MST, the edge is added $E(T_c)$, pointing away from the root. At the same time, the length $dist_{T_c}(u, v) = D(u, v)$ is attached to the edge. 

Finally, we define two types of extra nodes that must be removed in order to get a proper consensus tree: \textit{unnecessary} internal nodes and \textit{redundant} internal nodes. Any internal node $u \in I(\mathcal{G})$ that became a leaf in the MST, is considered unnecessary and can easily be removed because it does not have children. On the other hand, an internal node is redundant if it has only one child (leaf or internal node). Let $u$ be a redundant node, $p(u)$ is its parent, and $c(u)$ is its unique child. Then the node $u$ is deleted and a new edge $(p(u), c(u))$ is added. The length of the new edge is the total length of the previous path such that $dist_{T_c}(p(u), c(u)) = dist_{T_c}(p(u), u) + dist_{T_c}(u, c(u))$.

At the end of this phase, a proper consensus tree $T_c$ is returned by Algorithm \ref{alg:consensus}. 

We would like to emphasize that the output of PrimConsTree version 2 is not necessarily a binary tree. While the recursive edge selection process might initially suggest a binary structure, the inclusion of non-binary relationships remains possible.
Specifically, during the node removal step (Algorithm \ref{alg:consensus}), internal nodes with multiple children that do not contribute to distinct clades can lead to the emergence of non-binary nodes in the final consensus tree. The node removal process in Algorithm 3 ensures that non-binary structures can be maintained when internal nodes with multiple children fail to contribute uniquely to the topology. This step allows for flexibility, supporting non-binary consensus trees when appropriate, reflecting ambiguous or weakly supported evolutionary relationships. The AttachLeaves (Algorithm \ref{alg:primleaves}) plays a critical role in handling terminal edges for taxa represented by single nodes, ensuring that each taxon is properly attached to the consensus tree. This step is essential in maintaining both binary and non-binary relationships, especially when dealing with taxa that form terminal clades.

\begin{algorithm}[!t]
\caption{\textsc{ConsensusFromMST}: orient the tree, remove unnecessary/redundant internal nodes, and attach branch length}
\label{alg:consensus}
\begin{algorithmic}
\State \textbf{Input:} $\mathcal{M}$ the minimum spanning tree of $\mathcal{G}$ and \\
\hspace{0.9cm} $\mathcal{G}$ the super-graph with vertices $V(\mathcal{G}) = I(\mathcal{G}) \cup L(\mathcal{G})$ \\
\hspace{0.9cm} and average edge length $D$
\State \textbf{Output:} $T_c$ a consensus tree with branch length
\end{algorithmic}
\begin{algorithmic}[1]
\Function{ConsensusFromMST}{$\mathcal{M}$, $\mathcal{G}$}
    \State $T_c \gets \langle V(T_c), E(T_c) \rangle$ \Comment{Initialize empty consensus tree}
    \State $L(T_c), I(T_c) \gets L(\mathcal{G}), I(\mathcal{G})$
    \ForAll{edge $(u, v) \in E(\mathcal{M})$}
        \State Orient $(u, v)$ to point away from the root
        \State Add the edge $(u, v)$ to $E(T_c)$
        \State $dist_{T_c}(u, v) \gets D_{u, v}$ \Comment{Attach average edge length}
    \EndFor
    \ForAll{node $u \in V(T_c)$ in postorder}
        \If{$u \notin L(T_c)$}
            \State $children \gets$ list of children of $u$ in $T$
            \If{$|children| = 0$} %\Comment{Unnecessary $u$}
                \State Remove $u$ from $V_{T}$
            \ElsIf{$|children| = 1$} %\Comment{Redundant internal node}
                \State $p(u) \gets$ parent of $u$ in $mst$
                \State $c(u) \gets children[0]$
                \State Add the edge $(p(u), c(u))$ to $E(T_c)$
                \State $dist_{T_c}(p(u), c(u)) \gets dist_{T_c}(p(u), u)$
                \State \hspace{8.5em} $+ dist_{T_c}(u, c(u))$
                \State Remove $u$ from $V(T_c)$
            \EndIf
        \EndIf
    \EndFor
    \State \Return $T_c$
\EndFunction
\end{algorithmic}
\end{algorithm}

%Sub section-------------------4------------------ of Mehtod
\subsection{Time complexity}

\begin{lemma}
    The number of vertices and edges in the super-graph $\mathcal{G}$ are both
$O(nk)$.
\end{lemma}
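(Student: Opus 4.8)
The plan is to bound the contribution of a single input tree and then sum over the $k$ trees, exploiting the fact that the vertex set and edge set of $\mathcal{G}$ are, by construction, unions of the corresponding sets of the $T_i$ (Definition~\ref{def:tree_union}), so their cardinalities are subadditive over that union.

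First I would establish a per-tree bound. Fix $T_i \in S$; it has $n = |L(S)|$ leaves, and by the definition of a phylogenetic tree every internal node has at least two children. A handshake-style count then gives $|I(T_i)| \le n-1$: the sum of the number of children over all nodes of $T_i$ equals the number of edges, namely $|V(T_i)| - 1 = |I(T_i)| + |L(T_i)| - 1$; on the other hand that same sum is at least $2\,|I(T_i)|$, since leaves contribute $0$ and each internal node contributes $\ge 2$. Rearranging yields $|I(T_i)| \le |L(T_i)| - 1 = n-1$, hence $|V(T_i)| \le 2n-1$ and $|E(T_i)| = |V(T_i)| - 1 \le 2n-2$.

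Next I would pass to the super-graph. By Definition~\ref{def:tree_union}, $I(\mathcal{G}) = \bigcup_{i=1}^{k} I(T_i)$, $L(\mathcal{G}) = \bigcup_{i=1}^{k} L(T_i) = L(S)$, and $E(\mathcal{G}) = \bigcup_{i=1}^{k} E(T_i)$. Subadditivity of cardinality over unions gives $|I(\mathcal{G})| \le \sum_{i=1}^{k} |I(T_i)| \le k(n-1)$ and $|E(\mathcal{G})| \le \sum_{i=1}^{k} |E(T_i)| \le k(2n-2)$, while $|L(\mathcal{G})| = n$. Therefore $|V(\mathcal{G})| = |I(\mathcal{G})| + |L(\mathcal{G})| \le k(n-1) + n$, and both $|V(\mathcal{G})|$ and $|E(\mathcal{G})|$ are $O(nk)$, as claimed.

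The argument is elementary and I do not expect a genuine obstacle; the only point requiring attention is the per-tree bound $|I(T_i)| \le n-1$, which leans essentially on the requirement that every internal node has at least two children (without it, degree-$2$ internal nodes would make $|I(T_i)|$ unbounded in terms of $n$ alone). I would also remark in passing that identifying equivalent internal nodes across trees (those defining the same clade) can only decrease $|I(\mathcal{G})|$ and $|E(\mathcal{G})|$, so it does not weaken the upper bound; the stated $O(nk)$ is an overcount that suffices for the later complexity analysis.
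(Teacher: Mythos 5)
Your proof is correct and follows essentially the same route as the paper: bound each $T_i$ by $2n-1$ nodes and $2n-2$ edges, then multiply by $k$ using the union structure of $\mathcal{G}$. The only difference is that you explicitly derive the per-tree bound via the handshake count from the requirement that internal nodes have at least two children, whereas the paper simply asserts that binary trees are the worst case; your version fills in that small justification but is otherwise the same argument.
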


\begin{proof}
    Given a binary tree $T$, its number of nodes is $|V(T)| = 2n - 1$ and the number edges is $|E(T)| = 2n - 2$. In the worst case, every tree in $S$ is binary, therefore, at most $k \cdot (2n-1)$ distinct nodes are included in $V(\mathcal{G})$ and at most $k \cdot (2n-2)$ distinct edges are included in $E(\mathcal{G})$.
\end{proof}

\begin{lemma}
    Building the super-graph in Algorithm \ref{alg:supergraph} is done in $O(nk)$ time.
\end{lemma}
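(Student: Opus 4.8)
The plan is to bound the running time of Algorithm~\ref{alg:supergraph} line by line, charging each piece against the sizes of $\mathcal{G}$ already established in the preceding lemma. The algorithm has three parts: the initialization (lines~2--3), the outer \texttt{for} loop over the $k$ input trees (lines~4--13), and the closing loop over $E(\mathcal{G})$ that divides each accumulated length by its edge frequency (lines~14--15). I would show that the first two parts together take $O(nk)$ time and the last part takes $O(nk)$ time, which yields the claim.

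The routine part is quickly dispatched. Initialization copies $L(S)$ into $L(\mathcal{G})$ in $O(n)$ time. In a single iteration of the outer loop we process a phylogenetic tree $T_i$ on $n$ leaves, so $|V(T_i)| \le 2n-1$ and $|E(T_i)| \le 2n-2$, i.e.\ both are $O(n)$. If $I(\mathcal{G})$, $E(\mathcal{G})$ and the attribute maps $W$, $F$, $D$ are maintained as hash tables keyed by the canonical vertex names (and by ordered pairs of names for edges), then the two unions $I(\mathcal{G}) \gets I(\mathcal{G}) \cup I(T_i)$ and $E(\mathcal{G}) \gets E(\mathcal{G}) \cup E(T_i)$ and every one of the $O(n)$ updates $W(u,v) \gets W(u,v)+1$, $F(v) \gets F(v)+1$, $D(u,v) \gets D(u,v)+dist_{T_i}(u,v)$ cost $O(1)$ amortized expected time; hence the body of one iteration is $O(n)$, except for the renaming on line~5, discussed below. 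Summing over the $k$ iterations gives $O(nk)$. Finally, by the previous lemma $|E(\mathcal{G})| = O(nk)$, so the closing loop performs one division per edge in $O(nk)$ time. Adding the three contributions gives the $O(nk)$ bound.

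The step that actually requires an argument — and the main obstacle — is line~5, ``rename all internal nodes in $T_i$'', because the names must be \emph{canonical across all trees}: $u \in I(T_i)$ and $v \in I(T_j)$ must receive the same name exactly when $C_u = C_v$. Storing the explicit leaf set $C_u$ at each node would cost $\Theta(|C_u|)$ there, which is $\Theta(n^2)$ on a caterpillar tree, too slow. Instead I would give each internal node a constant-size fingerprint of its clade that depends only on the set $C_u$ and is computable bottom-up in $O(1)$ time per node: fix once and for all a tag for each of the $n$ leaf labels, and let the fingerprint of an internal node be an order-independent combination (for instance an additive or XOR-type hash) of its children's fingerprints. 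A single postorder traversal of $T_i$ then computes all fingerprints in time proportional to the number of parent--child pairs in $T_i$, i.e.\ $O(n)$, so the renaming over all $k$ trees costs $O(nk)$. Distinct clades get distinct fingerprints with high probability, and residual collisions can be detected and resolved without affecting the asymptotic bound (e.g.\ by confirming equality of the underlying leaf sets only for colliding pairs, or by substituting a deterministic constant-size clade encoding relative to a fixed leaf ordering). With line~5 thus realized in $O(n)$ per tree, the total running time of Algorithm~\ref{alg:supergraph} is $O(nk)$.
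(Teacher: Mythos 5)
Your proof is correct and follows the same high-level decomposition as the paper's: initialization, the outer loop over the $k$ trees at $O(n)$ per iteration, and the closing averaging loop over the $O(nk)$ edges of $\mathcal{G}$. The difference is that the paper's proof simply asserts that ``all its operations can be done in $O(n)$ time'' per iteration, whereas you correctly identify that the only step for which this is not immediate is line~5, the canonical renaming of internal nodes, since naively materializing each clade $C_u$ costs $\Theta(n^2)$ per tree on a caterpillar. Your fingerprinting argument (an order-independent hash of the children's fingerprints computed in one postorder pass) fills this gap and is a genuine improvement in rigor over the paper's proof. Two small caveats: a randomized fingerprint makes the bound hold only in expectation or with high probability, which you acknowledge; and a truly deterministic \emph{constant-size} injective encoding of arbitrary leaf subsets cannot exist in general (there are $2^n$ possible clades), so the deterministic fallback you gesture at needs to be something like Day's cluster representation --- relabel leaves by a reference ordering and encode each clade as an interval $[\min,\max]$ together with its cardinality, accepting the name only when $|C_u|=\max-\min+1$ --- which does achieve $O(n)$ per tree deterministically. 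With either realization your argument is sound, and it buys a justification of the per-iteration cost that the paper leaves implicit.
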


\begin{proof}
    The first loop (starting at line 4) iterates $k$ times and all its operations can be done in $O(n)$ time, resulting in an $O(nk)$ time complexity. The second loop (starting at line 14) iterates over the edges of $E(\mathcal{G})$ so it finishes in $O(nk)$ time, too. They are executed sequentially so the time complexity is $O(nk)$.
\end{proof}

\begin{lemma}
    Finding a maximum spanning tree $\mathcal{M}$ in Algorithm \ref{alg:prims} is done in $O(nk \cdot \log(nk))$ time.
\end{lemma}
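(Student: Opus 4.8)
The plan is to observe that Algorithm \ref{alg:prims} is, structurally, the classical algorithm of Prim run on the subgraph $\mathcal{G}[I(\mathcal{G})]$, followed by a single call to \textsc{AttachLeaves}, and then to invoke the standard priority-queue analysis of Prim's algorithm. By the preceding lemma, $|V(\mathcal{G})| = O(nk)$ and $|E(\mathcal{G})| = O(nk)$; in particular $|I(\mathcal{G})| = O(nk)$ and the number of edges of $\mathcal{G}$ incident to internal vertices is $O(nk)$.

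First I would note that the edge-selection rule used in the inner loop — maximize $W(u,v)$, then $F(v)$, then $F(u)$ — is precisely comparison with respect to the lexicographic order on the triple key $\kappa(u,v) = (W(u,v),\, F(v),\, F(u))$. This is a total preorder and any two such keys are compared in $O(1)$ time, so it can serve as the priority in a heap-based implementation at no asymptotic cost.

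Next I would replace the literal $O(|V|\cdot|E|)$ double loop written in the pseudocode by the textbook efficient realization of Prim: maintain a max-priority queue over the internal vertices not yet in $\mathcal{M}$, keyed so that the priority of a vertex $v$ equals the key $\kappa$ of the best edge joining $v$ to $V(\mathcal{M})$ (and $-\infty$ if no such edge exists). Building the queue, performing $|I(\mathcal{G})|$ extract-max operations, and performing $O(|E(\mathcal{G}[I(\mathcal{G})])|)$ key-increase operations, one per edge relaxation, together cost $O\big((|V|+|E|)\log|V|\big)$ with a binary heap, which is $O(nk\log(nk))$ by the previous lemma. Correctness of this reformulation is the usual cut-property argument for Prim and is unaffected by which total order is used to select among edges of equal weight.

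Finally I would bound \textsc{AttachLeaves} (Algorithm \ref{alg:primleaves}): since every internal vertex already lies in $\mathcal{M}$, each of the $n$ leaves simply selects, among its incident edges in $\mathcal{G}$, the one maximizing first $W$ and then the clade frequency of the internal endpoint; scanning all leaf-incident edges once costs $O(|E(\mathcal{G})|) = O(nk)$, which is dominated by the MST phase. Summing the two phases yields the claimed $O(nk\log(nk))$ bound. The main point that needs care is the third step: one must make explicit that the candidate-set scan $C$ displayed in Algorithm \ref{alg:prims} is only the conceptual description of the selection, and that the stated running time follows from the standard heap-based implementation of Prim, with the lexicographic key contributing merely an $O(1)$ factor per comparison.
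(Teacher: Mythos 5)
Your proposal is correct and follows essentially the same route as the paper: both arguments appeal to the standard priority-queue implementation of Prim on the internal-vertex subgraph, note that the extra tie-breaking criteria cost only $O(1)$ per comparison, and bound \textsc{AttachLeaves} separately by $O(nk)$ before summing. Your version is merely more explicit about the fact that the pseudocode's candidate-set scan is a conceptual description to be replaced by the heap-based realization, which is a useful clarification but not a different argument.
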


\begin{proof}
    The MST of internal vertices (lines 1 to 17) add only constant time operations to the algorithm of Prim, so as the original algorithm, it can be implemented in $O((|V|+|E|) \cdot \log(|V|))$ time with a priority queue, which raises a time complexity of $O(nk \cdot \log(nk))$ for $\mathcal{G}$. Then to attach leaf vertices in Algorithm \ref{alg:primleaves}, $n$ leaf vertices are traversed (line 2) and for each, at most $k$ distinct edges can be available, so leaf vertices are attached in $O(nk)$. Consequently, the time complexity of Algorithm \ref{alg:prims} is $O(nk \cdot \log(nk))$.
\end{proof}

\begin{lemma}
In Algorithm \ref{alg:consensus}, the consensus tree $T_c$ is obtained from the MST in $O(nk)$ time.
\end{lemma}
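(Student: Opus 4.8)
The plan is to bound the running time of the two loops in Algorithm~\ref{alg:consensus} separately and add the bounds. First I would record the size estimates: by the preceding lemma the super-graph $\mathcal{G}$ has $O(nk)$ vertices and $O(nk)$ edges; since $\mathcal{M}$ is a tree (the output of Algorithm~\ref{alg:prims}) whose vertex set is contained in $V(\mathcal{G})$, we have $|V(\mathcal{M})| = O(nk)$ and $|E(\mathcal{M})| = |V(\mathcal{M})| - 1 = O(nk)$, and likewise $|V(T_c)| \le |V(\mathcal{M})| = O(nk)$. So both loops iterate $O(nk)$ times, and the only thing to check is that each iteration costs $O(1)$ amortized under a suitable representation of $T_c$.

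For the first loop (lines~4--7), the only non-trivial step is ``orient $(u,v)$ to point away from the root.'' I would perform all orientations at once by running a single BFS or DFS of $\mathcal{M}$ from the root vertex; this takes $O(|V(\mathcal{M})|+|E(\mathcal{M})|) = O(nk)$ time and produces, for every vertex, a parent pointer (hence the correct direction of each incident tree edge) and the list of its children, both of which are also needed by the second loop. Given this precomputation, each of the $O(nk)$ iterations does only $O(1)$ work: reading the stored direction, appending the edge to $E(T_c)$, and copying $D(u,v)$ onto it. Hence the first loop runs in $O(nk)$ time.

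For the second loop (lines~8--22), I would obtain a postorder ordering of $V(T_c)$ from the same traversal in $O(nk)$ time, then visit each node once. The case $|children| = 0$ deletes a leaf in $O(1)$. The case $|children| = 1$ reads the stored pointers $p(u)$ and $c(u)$, inserts the single edge $(p(u),c(u))$, sets its length with one addition, and deletes $u$ together with its two incident edges; with adjacency lists that carry direct handles to edges this is $O(1)$. Evaluating $|children|$ is $O(1)$ once children lists are maintained. The key observation is that contracting a redundant node $u$ leaves the degree of $p(u)$ unchanged (one edge removed, one added) and removes $u$ altogether, so the degree sum stays $O(nk)$ throughout and no node becomes more expensive to process than its current degree; summing over the $O(nk)$ nodes therefore gives $O(nk)$ for the second loop.

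Adding the two bounds yields the claimed $O(nk)$ running time for Algorithm~\ref{alg:consensus}. The part needing the most care is the bookkeeping for the postorder loop: one must justify that listing a node's children, locating its parent, and above all deleting a redundant node and rewiring its unique child are each constant-time operations. That is why I would pin down the data structure first — adjacency lists with parent pointers and edge handles, all built by the initial root-traversal — and then state explicitly that contraction never increases any vertex's degree, which keeps the aggregate cost linear in the size of $\mathcal{M}$, i.e. $O(nk)$.
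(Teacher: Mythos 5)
Your proposal is correct and follows essentially the same route as the paper: bound $|V(\mathcal{M})|$ and $|E(\mathcal{M})|$ by $O(nk)$, then argue each of the two loops does $O(1)$ work per iteration. You are in fact more careful than the paper's proof, which simply asserts the per-iteration cost is constant, whereas you justify it by precomputing orientations, parent pointers, and children lists with a single root traversal and by observing that contracting a redundant node never increases any vertex's degree.
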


\begin{proof}
    The number of nodes and edges in $\mathcal{M}$ are $O(nk)$. The first loop (starting at line 4) iterates through the edges in $E(\mathcal{M})$, doing only constant-time operations. The second loop line (starting at line 9) iterates over the nodes in $V(T_c)$, which is at this point equal to $V(\mathcal{M})$. Therefore both loops finish in $O(nk)$ time and Algorithm \ref{alg:primleaves} finishes in $O(nk)$ time.
\end{proof} 

\begin{theorem}
The time complexity to build a consensus tree from a set of $k$ phylogenetic trees on a set of $n$ leaves using PrimConsTree is in $O(nk \cdot \log(nk))$.
\end{theorem}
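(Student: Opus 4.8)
The plan is to derive the bound by composing the four preceding lemmas, since building a single consensus tree with PrimConsTree is simply the sequential execution of Algorithms~\ref{alg:supergraph}, \ref{alg:prims}, and \ref{alg:consensus}. First I would recall that the super-graph $\mathcal{G}$ has $O(nk)$ vertices and $O(nk)$ edges, and that Algorithm~\ref{alg:supergraph} constructs it in $O(nk)$ time. Next, Algorithm~\ref{alg:prims} runs the modified Prim procedure on $\mathcal{G}[I(\mathcal{G})]$ and then calls \textsc{AttachLeaves}; by the corresponding lemma this costs $O(nk \cdot \log(nk))$. Finally, Algorithm~\ref{alg:consensus} orients the MST, removes unnecessary and redundant internal nodes, and attaches branch lengths in $O(nk)$ time. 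Because the three algorithms are invoked one after another on an input of size $O(nk)$, the running time is the sum $O(nk) + O(nk\log(nk)) + O(nk)$, which simplifies to $O(nk\log(nk))$, giving the claimed bound for the two main phases.

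The remaining point to address is the preprocessing (clustering) phase together with the fact that PrimConsTree outputs one consensus tree per cluster. Here I would argue that partitioning $S$ by $k$-medoids yields clusters $S_1, \dots, S_m$ with $\sum_{i=1}^{m} |S_i| = k$, and that the two main phases are run independently on each $S_i$. Since each cluster has $|S_i| \le k$ trees, processing $S_i$ takes $O(n|S_i| \cdot \log(n|S_i|))$, hence $O(n|S_i| \cdot \log(nk))$ time, and summing over all clusters gives $\sum_{i=1}^m O(n|S_i|\log(nk)) = O\!\bigl(\log(nk)\sum_{i=1}^m n|S_i|\bigr) = O(nk\log(nk))$. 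Thus the total cost of generating all consensus trees is still within the stated bound, and the theorem follows.

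The part I expect to require the most care is not the arithmetic but pinning down exactly what ``using PrimConsTree'' is taken to include. If the clustering step itself (computing the pairwise tree distances needed by $k$-medoids and evaluating the Silhouette index over candidate numbers of clusters) is meant to fall inside the bound, then a separate accounting of that step would be needed, and a naive pairwise-distance computation is already $\Omega(k^2)$, which is not $O(nk\log(nk))$ in general. So in the write-up I would state explicitly that the $O(nk\log(nk))$ bound refers to the consensus-construction phases (per cluster, and aggregated over clusters as above), treating the clustering preprocessing as a separate concern; this is consistent with how the preceding lemmas are phrased and with the algorithm numbering used in Section~\ref{sec:method}.
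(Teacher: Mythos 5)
Your first paragraph is exactly the paper's own proof: it composes the three preceding complexity lemmas for Algorithms~\ref{alg:supergraph}, \ref{alg:prims}, and \ref{alg:consensus} run sequentially, giving $O(nk) + O(nk\cdot\log(nk)) + O(nk) = O(nk\cdot\log(nk))$. The paper stops there and silently excludes the clustering preprocessing from the accounting, so your per-cluster aggregation and your explicit caveat that the $k$-medoids/Silhouette step is $\Omega(k^2)$ and must be treated separately go beyond what the paper establishes, but they are correct and only make the scope of the stated bound more precise.
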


\begin{proof}
    PrimConsTree involves three main algorithms that run sequentially: Algorithm \ref{alg:supergraph} builds the super-graph in $O(nk)$ time, Algorithm \ref{alg:prims} finds an MST in $O(nk \cdot \log(nk))$ time and Algorithm \ref{alg:consensus} yields a proper consensus tree in $O(nk)$ time. The total time complexity of PrimConsTree is $O(nk \cdot \log(nk))$.
\end{proof}

\section{Comparative analysis}

In this section, we will provide a detailed description of the datasets employed—both simulated and biological—that were utilized to assess the performance of our novel PrimConsTree methodology.

\subsection{Simulations}
To generate phylogenetic trees, the HybridSim program \cite{woodhams2016simulating} was utilized with specified parameters: the number of trees $k \in \{10, 30, \dots, 130, 150\}$, the number of leaves $n \in \{10, 20, 30, 40, 50\}$, and the coalescence rate $c \in \{10, 7.5, 5, 2.5, 1\}$. For each combination of these parameters, ten distinct sets of input trees were produced. For each set of input trees, the PrimConsTree version 2 was computed to summarize the phylogenetic relationships represented by the input data. The analysis also included the extended majority rule \cite{bryant2003classification,jansson2016improved} to assess clade support, as well as the frequency difference method \cite{goloboff2003improvements} to evaluate the variability in clade support across the generated trees. Additionally, the previous version of PrimConsTree version 1 \cite{sifat2024new} was employed, which considered minimum branch lengths and edge frequencies in the construction of the consensus tree.

To evaluate the quality of each consensus tree, we measured the average distance between the consensus tree and each input tree, as described in Equation (\ref{eq:objective}). During the simulation step, we utilized the Kendall–Colijn distance \cite{kendall2016mapping}, which is integrated into Equation (\ref{eq:objective}). The formulation of this distance metric is presented as follows:

\begin{definition}[Kendall-Colijn distance \cite{kendall2016mapping}]
\label{def:KCdistance}
The Kendall–Colijn distance quantifies the dissimilarity between two phylogenetic trees $T_1$ and $T_2$ by incorporating both topological and branch length information. It is defined as:
\end{definition}

\begin{equation}
    \label{eq:kcdist}
    D_{KC}^\lambda(T_1, T_2) = \sum_{u,v \in L| u \neq v} \left| (1 - \lambda) \cdot \text{dist}_{\text{top}}(u, v) + \lambda \cdot \text{dist}_{\text{BL}}(u, v) \right|,
\end{equation}

where $\text{dist}_{\text{top}}(u, v) $ is the topological difference between taxa $u$ and $v$ in trees $T_1$ and $T_2$, $\text{dist}_{\text{BL}}(u, v)$ represents the difference in branch lengths between taxa $u$ and $v$ in the two trees, calculated as the absolute difference in the path lengths connecting them in $T_1$ and $T_2$, and $\lambda \in [0, 1]$ is a weighting parameter that balances the contributions of topological and branch length differences, allowing for flexibility in the comparison. A value of $\lambda = 1$ emphasizes branch length differences, while $\lambda = 0$ focuses solely on topological differences. We alternatively employed the Kendall-Colijn distance without and with consideration of branch lengths (respectively $\lambda=0$ and $\lambda=0.5$) in Equation (\ref{eq:kcdist}).

In the simulations using the extended majority rule algorithm implemented in Bio.Phylo (Python v1.75, Consensus module) \cite{cock2009biopython}, branch lengths in the consensus tree are calculated by averaging the branch lengths of corresponding clades across the input trees. In contrast, the frequency difference consensus tree algorithm disregards branch lengths; all branches in the resulting consensus tree are assigned a fixed length of $1$, making this method strictly topological.

\subsection{Biological data}

In our research, we applied a new algorithm to analyze the evolution of 47 ribosomal proteins from 14 Archaeabacteria organisms, including 11 Euryarchaeota species and 3 Crenarchaeota species. These data, initially studied by Matte-Tailliez et al. \cite{matte2002archaeal} then by Boc et al. \cite{boc2010inferring}, led to the inference of a single species phylogenetic tree by concatenating the protein sequences. The evolution of each protein can be visualized through its unique phylogenetic tree. Utilizing cluster analysis on these trees can unveil distinct evolutionary scenarios represented by the sequences.  By grouping similar gene trees together, we can isolate those that exhibit consistent lineage sorting patterns, allowing for a clearer understanding of evolutionary processes at play.

The PHYML method was used to infer 47 phylogenetic trees from the 47 complete alignments. We then applied our $k$-medoids \cite{tahiri2018new} tree clustering algorithm, based on SH \cite{rousseeuw1987silhouettes} index and the non-squared Robinson-Foulds distance, to partition the set of 47 trees. We followed the same protocol as in our previous study \cite{tahiri2018new}, in which the $k$-medoids clustering method was used. By maintaining the same clustering strategy, we ensure comparability of consensus trees generated by different inference methods. In this study, we used the PrimConsTree version 2 algorithm for consensus tree inference, compared to majority-rule consensus trees used in previous work.
The maximum SH index indicated five clusters, corresponding to five different horizontal gene transfer scenarios, illustrated in Figure \ref{fig:results} (b-f).

The first cluster contained 11 trees, the second 4 trees, the third 20 trees, the fourth 11 trees, and the fifth 1 tree. For these clusters, we inferred the extended majority consensus trees and PrimConsTree version 2. The tree reconciliation process, involving subtree pruning and regrafting (SPR) moves (representing horizontal gene transfers) of species tree clusters, resulted in the transformed topology of the species tree matching that of the gene tree. The computations were performed using the version of the algorithm available on the T-Rex website \cite{boc2012t}.

\section{Results}

In this section, we first present the results of our simulations (Section 6.1), which were conducted using synthetic data described in Section 5.1. Subsequently, in Section 6.2, we detail the findings of our clustering analysis on the evolutionary patterns of 47 ribosomal proteins derived from data of 14 Archaeabacteria organisms, originally investigated by Matte-Tailliez et al. \cite{matte2002archaeal}.

\subsection{Simulations}
We conducted a performance evaluation of PrimConsTree, the frequency difference consensus tree, and the extended majority rule consensus tree using the input trees generated for this purpose. The initial version of PrimConsTree, as proposed by Sifat and Tahiri \cite{sifat2024new}, is designated as PrimConsTree version 1, while the version developed in this study is referred to as PrimConsTree version 2.

The evaluation results, illustrated in Figure \ref{fig:results_kc} (d, e, f), are based on minimizing the objective function using the Kendall-Colijn distance with $\lambda = 0.5$, which integrates both topological and branch length information into the distance measure. PrimConsTree version 2 consistently demonstrated superior accuracy compared to the other methods, particularly when the number of input trees exceeded 30. As the number of trees ($k$) increased, the precision of PrimConsTree version 2 continued to improve, as shown in Figure \ref{fig:results_kc} (d). In addition, Figure \ref{fig:results_kc} (e) highlights that PrimConsTree version 2 maintained its performance advantage across a wide range of leaf counts, with especially notable improvements for larger numbers of leaves ($n$).

The increase in distance between the input trees and the consensus tree, observed for all methods as the number of leaves increased, can be attributed to the greater incidence of conflicting phylogenetic signals in larger trees. This reflects the inherent complexity of reconciling trees with a higher degree of topological variance. Moreover, a higher coalescence rate, which indicates reduced genetic divergence between lineages, resulted in consensus trees more closely approximating the input trees. This suggests that higher rates of coalescence may limit the divergence of individual trees, thereby constraining the topological variance across the input set.
In most scenarios evaluated, PrimConsTree version 2 exhibited superior performance relative to PrimConsTree version 1. This outcome aligns with theoretical expectations: while PrimConsTree version 1 selects the most frequent edges across the input trees, PrimConsTree version 2 prioritizes edges with the shortest average branch lengths, which may correspond to more reliable evolutionary signals. The association between short branch lengths and low clade support, as discussed by Wiens \cite{Wiens2008}, is particularly relevant here, as the enhanced method emphasizes edges with more consistent support across trees.

\begin{figure*}[!htp]
    \centering
    \includegraphics[width=\textwidth]{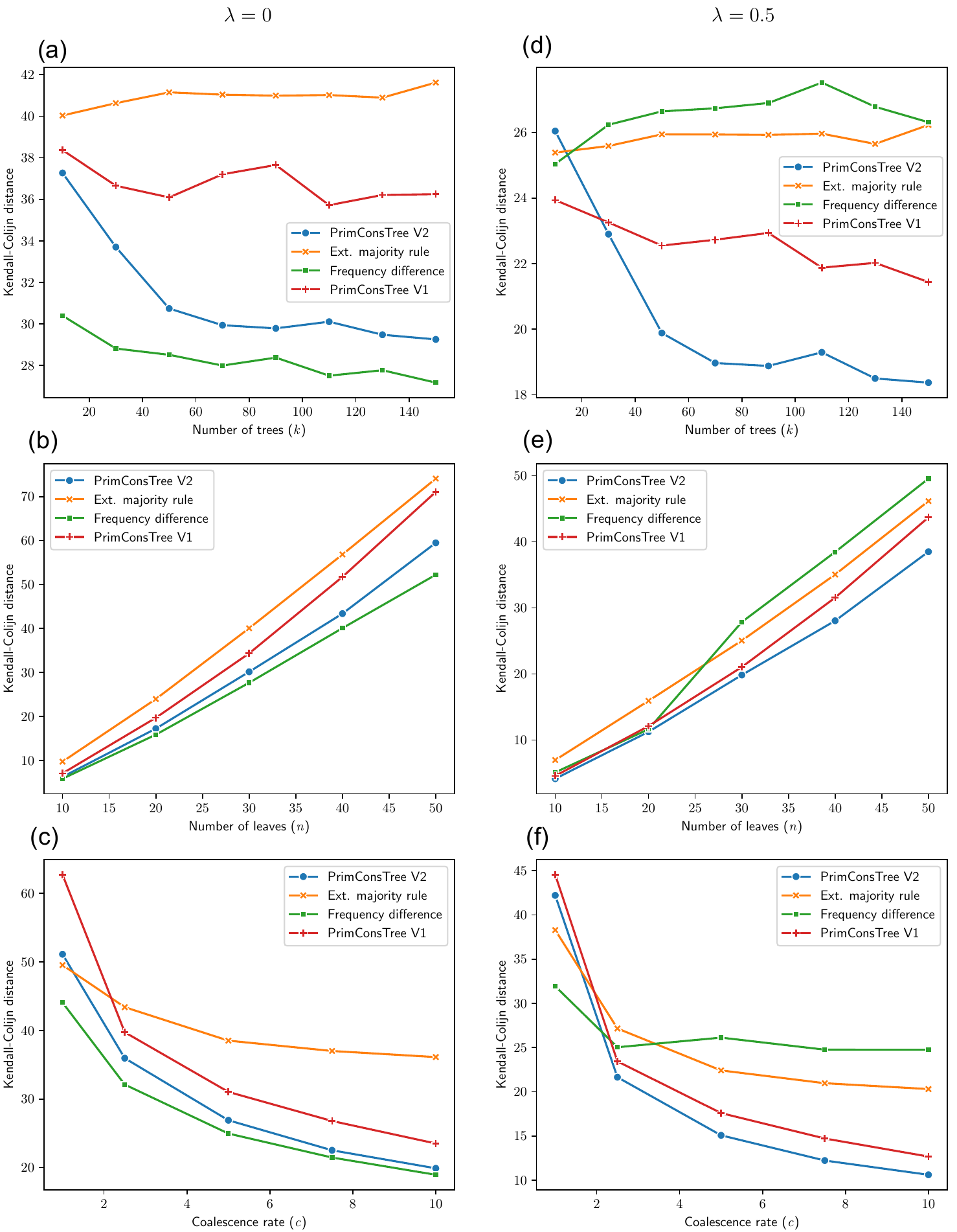}
    \caption{Evaluation of the objective function using the Kendall–Colijn distance for PrimConsTree version 1 (denoted as V1) and version 2 (denoted as V2), as well as for Extended Majority Rule and Frequency Difference consensus trees. In plots (a, b, c), branch lengths are considered ($\lambda = 0.5$), while plots (d, e, f) focus solely on topology ($\lambda = 0$).}
    \label{fig:results_kc}
\end{figure*}

Although branch lengths are integral to the PrimConsTree methodology, the results presented in Figure \ref{fig:results_kc} (a, b, c) focus exclusively on topological accuracy, using the Kendall-Colijn metric with $\lambda = 0$, where only tree topology is considered. The frequency difference consensus tree, which does not incorporate branch lengths, proved to be robust in preserving the underlying topological structure, yet PrimConsTree remains competitive and frequently outperformed this method in terms of overall accuracy, even in purely topological assessments.
To better understand where PrimConsTree is the most relevant method, we show in Figure \ref{fig:results_lamb} the evaluation of the Kendall-Colijn metric with respect to the value of the $\lambda$ parameter.

\begin{figure}[!htp]
    \centering
    \includegraphics[width=0.7\textwidth]{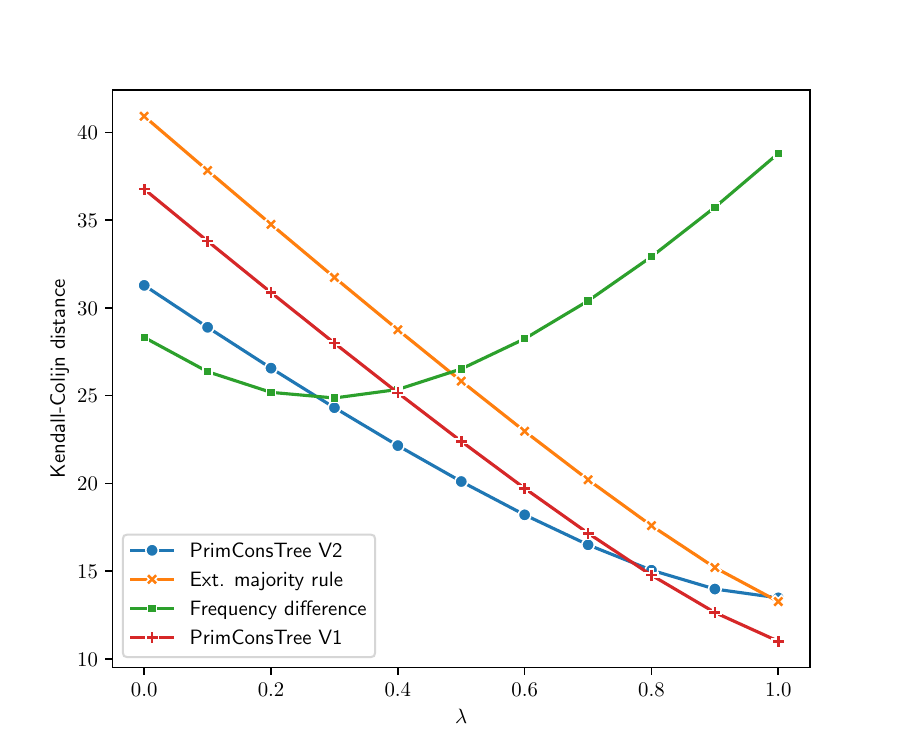}
    \caption{Evaluation of the objective function using the Kendall–Colijn distance for different values of $\lambda$. Methods evaluated are PrimConsTree version 1 (denoted as V1) and version 2 (denoted as V2), as well as for Extended Majority Rule and Frequency Difference consensus trees.}
    \label{fig:results_lamb}
\end{figure}

We carried out a supplementary analysis to assess the performance of the Kendall-Colijn distance formulas (Equation (\ref{eq:objective}) and Equation (\ref{eq:kcdist})) at different $\lambda$ values, comparing the accuracy of PrimConsTree version 1, PrimConsTree version 2, Extended Majority Rule and Frequency Difference consensus trees. The results are shown in Figure \ref{fig:results_kc}. Our method outperformed the other approaches in the $\lambda$ range from 0.3 to nearly 0.8, demonstrating robustness in scenarios where both topology and branch length are evaluated simultaneously. Even beyond this range, PrimConsTree version 2 consistently ranked as the second best method, highlighting its reliability and adaptability in cases where the relative importance of topology and branch length is uncertain or not explicitly defined. These results indicate that our algorithm is particularly effective at balancing these two factors over a wide range of weighting schemes, making it a valuable tool in applications where the contribution of branch length is significant.

\subsection{Biological data}

The results inferred from the analysis, presented in Figure \ref{fig:results} (b to f), account for five different histories. While the clustering of gene trees may suggest the presence of different evolutionary histories, it is important to consider alternative explanations, such as biases due to substitution model misfit, long-branch attraction, or other analytical artifacts. These factors could lead to replicated incongruence across gene trees, particularly when small datasets are used for tree reconstruction. Thus, the observed clusters may reflect a combination of true biological signals, such as horizontal gene transfer (HGT) or recombination, and methodological biases, which should be carefully evaluated.

\begin{figure*}[!]
\centering
\includegraphics[width=\textwidth]{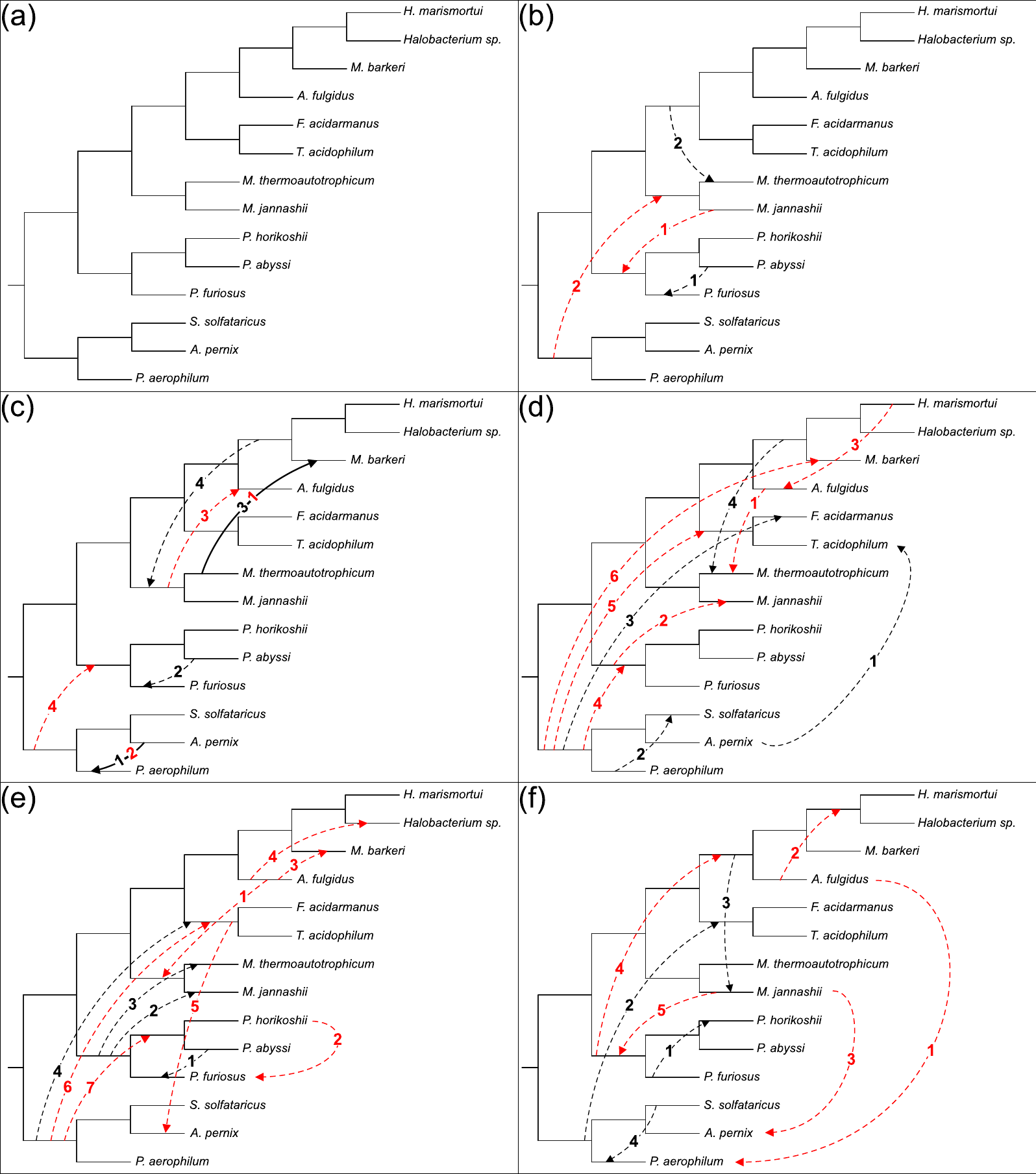}
\caption{This figure, adapted from Figure 6 (a) in \cite{tahiri2018new}, present alternative scenarios of evolution found using the majority consensus trees and PrimConsTree version 2. Panel (a) is the species tree and panels (b, c, d, e, f) are respectively inferred from clusters SH1 to SH5. Transfers are labelled in the order that they occurred. A black dotted line represents a transfer found with majority rule, a red dotted line is a transfer found using PrimConsTree version 2 and a solid black line stand for both.}
\label{fig:results}
\end{figure*}

Transfers found with extended majority rule are black dotted lines, the ones found with PrimConsTree version 2 are red dotted lines, and a solid line stands for both. To refer to a transfer, we prefix its number with $Px-$ if it was inferred with PrimConsTree version 2 and $Mx-$ if it was inferred with the extended majority rule, $x$ being the number of the cluster (1 to 5). Three of the transfers found with PrimConsTree version 2 were already validated by Boc et al. in \cite{boc2013inferring}. 
In particular, the equivalent transfers P4-2 and M5-1, predicted in Boc et al. 2010 (see Fig. 6 in \cite{boc2010inferring}) were found by both PrimConsTree version 2 and the extended majority rule. The majority rule additionally revealed M3-2 (or its equivalent M5-4) that were predicted in the same analysis. 
The transfers P1-1 (or its equivalent P3-2 and P5-2), P2-2, and P4-2 has been predicted by Boc et al. 2013 (see Fig. 3 in \cite{boc2013inferring}). The latter were also found by Tahiri et al. 2018, using the extended majority rule, respectively as M4-2, M2-2 and M5-1 in Figure \ref{fig:results}. Only the majority rule was able to find M1-1 (or its equivalent M2-2 and M4-1), M3-2 (or its equivalent M5-4) and M4-3.
Finally, Tahiri et al. 2018 found M3-1 and M3-2 (or its equivalent M5-4) using the extended majority rule, those were predicted by Boc et al. 2013 (see Fig. 2b in \cite{boc2013inferring}). 
Moreover, we identified a higher number of gene transfers, indicating that a non-parsimonious model, involving both a split and a fusion event, provides the most likely explanation for the evolutionary history of Archaeabacteria organisms, consistent with the hypothesis proposed by \cite{matte2002archaeal}.

\section{Conclusions}

In this paper, we extended PrimConsTree version 1, a graph-based approach for constructing consensus trees with balanced branch lengths. To achieve this, we clustered the input trees and proposed new key criteria, such as clade frequency, to derive the maximum spanning tree. We provide a detailed description of the supergraph construction and an enhanced version of the well-established Prim algorithm.

We utilized the PrimConsTree version 2 method to infer horizontal gene transfer among a group of Archaeabacteria organisms. A comparison of our results with those obtained using the extended majority rule identified interesting transfers but did not indicate a superiority of our method.

For future research directions, we propose exploring the extension of the applicability of PrimConsTree version 2 to address the supertree problem, particularly in situations where the number of leaves in gene trees may vary. Furthermore, there is potential for additional refinements to enhance the ability of the algorithm to generate a consensus tree that closely aligns with the gene trees.

\section*{Declarations}

\subsection*{Ethics approval and consent to participate}
Not applicable.

\subsection*{Consent for publication}
Not applicable.

\subsection*{Availability of data and material}
The source code for PrimConsTree (Python v 3.11), along with all datasets utilized in this study, are openly accessible and publicly available on GitHub at \url{https://github.com/tahiri-lab/PrimConsTree}.

\subsection*{Competing interests}
No competing interest is declared.

\subsection*{Funding} 
This work is supported in part This research was funded by the Natural Sciences, Engineering Research Council of Canada - Discovery Grants RGPIN-2022-04322 and Fonds de recherche du Québec - Nature and Technologies 326911, and JSPS KAKENHI grant 22H03550/23K24807.

\subsection*{Author contribution}
All authors contributed equally to the development of this article.

\bibliographystyle{unsrt}  
\bibliography{references}

\newpage
\appendix

\section{Algorithms}
Algorithm \ref{alg:primleaves} depicts how the vertices that must be leaf nodes in the MST are connected.

\begin{algorithm}[!]
\caption{\textsc{AttachLeaves}: Incorporate vertices supposed to be leaf nodes in a maximum spanning tree}
\label{alg:primleaves}
\begin{algorithmic}
\State \textbf{Input:} $\mathcal{M}$ the maximum spanning tree of $G[I(\mathcal{G})]$, \\
\hspace{0.9cm} $\mathcal{G}$ the super-graph with $V(\mathcal{G}) = I(\mathcal{G}) \cup L(\mathcal{G})$, \\
\hspace{0.9cm} edge frequencies $W$, and \\
\hspace{0.9cm} clade frequencies $F$
\State \textbf{Output:} $\mathcal{M}$ with nodes $L(\mathcal{G})$ attached
\end{algorithmic}
\begin{algorithmic}[1]
\Function{AttachLeaves}{$\mathcal{M}$, $\mathcal{G}$}
    \ForAll{node $l \in  L(\mathcal{G})$}
        \State $e^* \gets \textbf{Null}$
        \State $C \gets$ \{$(v, l) \in E(\mathcal{G})$\}
        \ForAll{edge $(l, v) \in C$}
            \If{$W(v, l) > W(e^*)$}
                \State $e^* \gets (v, l)$
            \ElsIf{$W(v, l) = W(e^*)$}
                \If{$F(v) > F(e^*[0])$}
                    \State $e^* \gets (v, l)$
                \EndIf
            \EndIf
        \EndFor
        \State $V(\mathcal{M}) \gets V(\mathcal{M}) \cup \{l\}$
        \State $E(\mathcal{M}) \gets E(\mathcal{M}) \cup \{e^*\}$
    \EndFor
    \State \Return $\mathcal{M}$
\EndFunction
\end{algorithmic}
\end{algorithm}

\end{document}